\documentclass[a4paper,onecolumn,11pt]{quantumarticle}
\pdfoutput=1
\usepackage[utf8]{inputenc}
\usepackage[english]{babel}
\usepackage[T1]{fontenc}
\usepackage{amsthm}
\usepackage{amsmath}
\usepackage{hyperref}
\usepackage{tikz}
\usetikzlibrary{quantikz}
\usepackage{amssymb}
\usepackage{braket}
\usepackage{algorithm}
\usepackage{algpseudocode}
\usepackage{qtree}
\usepackage{caption}
\usepackage{lipsum}
\newtheorem{theorem}{Theorem}

\newtheorem{Lemma}[theorem]{Lemma}

\DeclareCaptionFormat{algor}{%
\hrulefill\par\offinterlineskip\vskip1pt%
\textbf{#1#2}#3\offinterlineskip\hrulefill}
\DeclareCaptionStyle{algori}{singlelinecheck=off,format=algor,labelsep=space}
\captionsetup[algorithm]{style=algori}
 
\begin{document}

\title{Quantum Enhanced Pattern Search Optimization}
\author{Colton Mikes, Ismael R. de Farias Jr., David Huckleberry Gutman, Victoria E. Howle}
\affiliation{Texas Tech University, Lubbock,
TX 91125, USA}
\maketitle

\begin{abstract}
  This paper introduces a quantum-classical hybrid algorithm for
  generalized pattern search (GPS) algorithms. We introduce a quantum search step
  algorithm using amplitude amplification, which reduces the number of oracle calls 
  needed during the search step from $\mathcal{O}\left(N\right)$ classical calls to 
  $\mathcal{O}\left(\sqrt{N}\right)$ quantum calls. 
  This work addresses three fundamental issues with using a quantum
  search step with GPS.  First we address the need to 
  mark an improved mesh point, a requirement of the amplitude amplification
  algorithm. Second, we introduce a modified version of the amplitude
  amplification algorithm QSearch, which is guaranteed to terminate using a
  finite number of iterations. Third, we avoid disrupting the GPS algorithm's
  convergence by limiting the quantum algorithm to the search step. 
\end{abstract}

\section{Introduction}

Traditional optimization algorithms rely on derivatives to obtain their results.
However, in many cases derivative information may be unavailable. Even if 
derivatives of some order exists, it may be impractical or prohibitively expensive to
calculate any of them. For example, it can be impractical to obtain gradient information when the objective function is computed using a
black-box simulation. Similarly, approximation methods such as finite differences become
prohibitively expensive for objective functions that are noisy or expensive to
calculate. Even for functions constructed recursively from elementary functions
using composition and arithmetic operations, gradient computation may incur onerous costs in
time and memory. Automatic differentiation can also be used to obtain gradient
information in some cases, but is not always practical since it needs
the source code for the function to be available.
In these situations we turn to derivative-free optimization (DFO) algorithms. 
DFO algorithms seek to solve a given
optimization problem without the use of differential information of any order. Instead they
rely on calls to a given oracle to evaluate the objective function. These calls
are often expensive and become the dominant cost of the algorithm. 
\par
In this work, we are concerned with a class of DFO algorithms known as 
Generalized Pattern Search (GPS). These algorithms generate a sequence of 
iterates with non-increasing objective function values by systematically
selecting points to be evaluated and considered for the next iterate. This 
process is split into two steps, the \textit{search step}, and the \textit{poll step}. 
The only difference between them is the points which are evaluated. The search
step is designed to allow for a great amount of flexibility in selecting which
points to consider, while the poll step considers only those points which are
needed to guarantee the algorithm will converge. A typical iteration will begin
with the search step. The poll step is invoked only if the search step 
fails to produce an iterate with a reduced objective function value. Since the poll step is 
not always necessary, the cost of implementing a GPS algorithm is often dominated by the search step.  
However, the various convergence results introduced by Audet and Dennis make no
assumptions about the search step \cite[Theorems 3.7, 3.9, 3.14 and Corollaries
3.11, 3.12, 3.16]{Audet}.
\par
The main contribution of this paper is a quantum algorithm which is able to
perform the search step using $\mathcal{O}\left(\sqrt{N}\right)$ calls to a
quantum oracle, compared to the $\mathcal{O}\left(N\right)$ needed using a
classical oracle. As in \cite{Gilyn},\cite{Jordan}, and \cite{Bernstein}, we 
follow the standard practice of assuming that the
difference in computational cost of a quantum and classical oracle call is
negligible. Our algorithm may be separated into two components: a modified version of the
amplitude amplification algorithm QSearch \cite{Brassard}, and a quantum
algorithm, which uses a quantum oracle to prepare the necessary inputs for our
modified QSearch algorithm. 
\par
The idea of using amplitude amplification to accelerate GPS algorithms was 
previously touched on by Arunachalam in \cite{Aru}. However,
in that work they failed to address the problems of marking
points that provide a reduction in objective function value,
how to guarantee that amplitude amplification will terminate in a finite number of steps,
and how to ensure convergence with the probability of error introduced by the quantum step. 
In this paper, we address all three of these problems. Our algorithm
uses a quantum oracle as  part of a subroutine for marking the points which provide a 
reduction in objective value; it is guaranteed to terminate in a finite number of steps
and preserves the original algorithm's convergence properties by performing the poll step on a classical computer.  
\par
The rest of this paper is organized as follows.
In section \ref{sec:preliminaries}, we review GPS
algorithms, quantum computing, and how to represent our optimization problem
using a quantum computer. In section \ref{sec:qsearch}, 
we introduce a modified version of the amplitude amplification algorithm
QSearch \cite[Theorem 3]{Brassard}. In section \ref{sec:amplitude_amplification}, we
discuss amplitude amplification and the original QSearch algorithm. In section
\ref{sec:stopping_criteria}, we introduce the modifications needed to ensure QSearch will
terminate using a finite number of iterations. In section \ref{sec:the_quantum_search_step}, 
we discuss our main product, the quantum search step algorithm. In section
\ref{sec:the_algorithm}, we develop the components needed to use our modified
QSearch algorithm to perform the search step of a GPS algorithm. In section
\ref{sec:advantages}, we discuss the advantages of using the quantum search step algorithm over
classical methods. We end the paper with some concluding remarks in section
\ref{sec:concluding_remarks}. 

\section{Preliminaries}\label{sec:preliminaries}

In this section, we review all of the technology and notation from GPS
algorithms and quantum computing required to understand this paper's main
developments. In section \ref{sec:gps}, we review GPS algorithms, following the
formulation presented in by Audet and Dennis in \cite{Audet}. In the original formulation
introduced by Torczon in \cite{Torczon}, the search and poll steps were not
treated as distinct entities. By separating an iteration into the search and
poll steps, Audet and Dennis were able to show that the search step may provide
increased flexibility, while convergence is guaranteed by the poll step. In section \ref{sec:qc}, we review some
of the fundamental components of quantum computing, with an eye toward the
ingredients necessary to represent our optimization problem on a quantum
computer.  

\subsection{Generalized Pattern Search Algorithms}\label{sec:gps}

In this section, we review the construction of GPS algorithms. These algorithms were originally
introduced by Torczon in \cite{Torczon} and expanded on by Audet and Dennis in
\cite{Audet}. Here we follow Audet and Dennis's formulation, where iterations
are separated into distinct search and poll steps, so that we may accelerate the
search step, without losing the guarantee of convergence provided by the poll
step.
\par
GPS algorithms solve the optimization problem
\begin{equation}\label{eq:opt}
  \min_{x \in \mathbb{R}^{n}} f\left(x\right)  
\end{equation}
for some $f:\mathbb{R}^{n} \rightarrow \mathbb{R}$, by generating a sequence of
iterates $\left\{x_{k}\right\}_{k=0}^{\infty}$ such that
$\left\{f\left(x_{k}\right)\right\}_{k=0}^{\infty}$ is a non-increasing
sequence.  Each iteration consists of
two phases, the \textit{search step} and the \textit{poll step}.
\par
In the search step, we evaluate $f$ at a finite number of points from a set called the
\textit{mesh}. At the $k$th iteration, the mesh is the set defined by 
\begin{equation}
M_{k} = \{x_{k} + \Delta_{k}Dz:z \in \mathbb{N}^{p}\},
\end{equation} 
where $x_{k}$ is our current iterate, $\Delta_{k}$ is a positive real number called the \textit{mesh size
parameter}, $D$ is a real valued $n \times p$ matrix whose columns form a
\textit{positive spanning set}, i.e. every vector in $\mathbb{R}^{n}$ can be
written as a non-negative linear combination of $D$'s columns. 
Additionally, $D$ satisfies the
restriction that its columns are the products of some non-singular \textit{generating
matrix} $G \in
\mathbb{R}^{n \times n}$ and integer vectors $z_{j} \in \mathbb{Z}^{n}$ for
$j=1,2,\hdots,p$. Symbolically, this restriction means $D$ takes the form $D =
\left[Gz_{1} \hdots Gz_{p}\right]$.   
At iteration $k$, we call a point $y \in M_{k}$ a \textit{improved mesh point}
if $f\left(x_{k}\right) > f\left(y\right)$. If such a point is found during the
search step, we end the iteration. Otherwise we begin the poll step.
\par
During the $k$th iteration, a subset of the columns of $D$ is selected, denoted by $D_{k}$,
which is a positive spanning set. The \textit{poll set} $P_{k}$ is formed using
the current iterate and the elements of $D_{k}$:   
\begin{equation}
  P_{k} = \{ x_{k} + \Delta_{k}d: d \in D_{k}\}.
\end{equation}
If the search step fails, we invoke a poll step, in which
we evaluate the objective function for each of the
elements in $P_{k}$ to check for an improved mesh point. If none are found, 
we call the current iterate $x_{k}$ a \textit{mesh local optimizer}.  
\par
Whenever a point $y$ is identified as an improved mesh point by either the search or poll
steps, we immediately end the iteration, set $x_{k+1} = y$,  and either increase or maintain 
the mesh size parameter. Alternatively, if $x_{k}$ is determined to be a mesh
local optimizer, then we set $x_{k+1} = x_{k}$ and shrink the mesh size
parameter. We refer interested readers to \cite{Audet} for a more detailed 
presentation of this version of the GPS algorithm.   

\subsection{Quantum Computing}\label{sec:qc}

In this section, we review the relevant basics of quantum computing, and discuss how to represent our optimization problem \eqref{eq:opt} on a quantum computer. As its name implies, 
a quantum computer is a computational device that achieves
its results by manipulating quantum mechanical systems. At any given time, the
state of a quantum computer can be described by a unit vector, called the
\textit{state vector}, in a complex valued inner product space. We store this
vector using \textit{qubits}, the fundamental unit of quantum computation. The
state of an arbitrary qubit may be written as $\ket{\phi} = \alpha\ket{0} + \beta\ket{1}$,
for $\alpha,\beta \in \mathbb{C}$, where $\left\{\ket{0},\ket{1}\right\}$ is the standard basis for
$\mathbb{C}^{2}$, referred to as the \textit{computational basis} for a single
qubit, denoted here by $\mathcal{C}\left(1\right)$. 
For those more familiar with non-physical mathematical notation, we
note that $\ket{0} = \left(1,0\right)^{T}$ and $\ket{1} = \left(0,1\right)^{T}$. 
Since $\ket{\phi}$ must be a unit vector, we require that
$\|\alpha\|^{2} + \|\beta\|^{2} = 1$. We call $\alpha$ and $\beta$ the
\textit{amplitudes} of $\ket{\phi}$. 
\par
For quantum computers that contain multiple distinct qubits,
the state vector may be obtained by taking the Kronecker product of the
qubits. The computational basis for a quantum computer with $2$ qubits,
$\mathcal{C}\left(2\right)$, is the Kronecker products of the elements in 
$\mathcal{C}\left(1\right)$ with each other. In general, the
computational basis of a quantum computer with $N$ qubits,
$\mathcal{C}\left(N\right)$, is the Kronecker products of all the elements in
$\mathcal{C}\left(N-1\right)$ and $\mathcal{C}\left(1\right)$.  
For two arbitrary qubits $\ket{\phi}$ and $\ket{\theta}$, we write the
composite state as $\ket{\phi}\ket{\theta} = \ket{\phi\theta}$. For a
quantum computer with $N$ qubits, all in the same arbitrary state $\ket{\phi}$,
we denote the state vector as $\ket{\phi}^{\otimes N}$. Additionally, when we
have a quantum computer with $N$ qubits, all in the same state from
$\mathcal{C}\left(1\right)$ , we often
omit the superscript entirely and write the state as $\ket{0}$ or $\ket{1}$, so
long as the number of qubits is clear from context. We define a given set of qubits as a 
\textit{quantum register}. A set of qubits is said to be \textit{entangled} if 
their state vector cannot be written as the Kronecker product of single qubit states.
\textit{Quantum logic gates} are modeled as unitary matrices, which may be applied to the state vector.
\par
An import aspect of quantum mechanics is that we cannot view the state vector directly.
All attempts to do so will produce one of the basis vectors instead. This action
is known as \textit{measurement}. The results of measurement are determined by
the amplitudes of the state vector. If more than one of the amplitudes are
non-zero, we say that the state vector is in a \textit{superposition} of the
corresponding states.  For example, the probability of
obtaining $\ket{0}$ upon measuring the qubit
$\ket{\phi} = \alpha\ket{0} + \beta\ket{1}$ is $\|\alpha\|_{2}^{2}$, similarly,
the probability of obtaining $\ket{1}$ is $\|\beta\|_{2}^{2}$. If $\alpha$ and
$\beta$ are both non-zero, we say that $\ket{\phi}$ is in a superposition of
$\ket{0}$ and $\ket{1}$. A \textit{quantum algorithm} is defined as a unitary matrix applied 
to the state vector, possibly followed by measurement. If no measurement occurs,
we call the quantum algorithm \textit{measurement-free}. For quantum computers with multiple 
qubits, measurement is performed on each qubit individually. In some
circumstances, it is advantageous to only measure a select few of the qubits. For example, algorithms
such as the swap test \cite{Buhrman} rely on measuring a single qubit, and using
the results of that measurement to obtain information on the resulting state
vector.
\par
We now discuss how to represent our optimization problem on a quantum computer.
We represent an arbitrary real number, up to $d$ bits of precision, using
a fixed-point \textit{Two's Complement} binary representation. The Two's
Complement representation is among the most common ways to represent signed
binary numbers. In this representation, the most significant bit represents the
sign of a given binary number. If this bit is $0$, the number is positive. If it
is $1$, it is negative. The negation of a given number is obtained by
flipping each of the bits in its representation and then adding $1$. 
\par
For $x \in \mathbb{R}^{n}$, we write $x$ as $d n$ bit binary string, where each element
is represented up to $d$ bits of precision, and the $kth$ set of $d$ bits in the
string is the $d$ bit fixed-point representation of the $kth$ element of $x$. Thus $x$ is encoded into $d n$
qubits, where the state of the $k$th qubit is determined by the $k$th bit of $x$. For example,
if $x=0110$, then $\ket{x} = \ket{0110}$. To evaluate the objective function $f$, 
we assume access to a quantum oracle $F$, such that 
$F\ket{x}\ket{0}^{\otimes n \cdot d} \rightarrow \ket{x}\ket{f\left(x\right)}$. 
We may use a single call to the oracle $F$ to evaluate the objective function for a set of points
$\left\{x_{j}\right\}_{j=0}^{m-1}$. This is achieved by first initializing a
quantum register in a superposition of the points and applying $F$,
\begin{equation*}
F\sum\limits_{j=0}^{m-1}\ket{x_{j}}\ket{0} \rightarrow 
\sum\limits_{j=0}^{m-1}\ket{x_{j}}\ket{f\left(x_{j}\right)}. 
\end{equation*}
However, since the
function evaluations are stored as a superposition, we cannot view them
directly. In the following sections, we will introduce the tools necessary for
using $F$ to identify reduced objective function values.  
\section{Modified QSearch}\label{sec:qsearch}

In this section, we introduce a modified version of the 
\textit{amplitude amplification} method \textit{QSearch} introduced 
in \cite[Theorem 3]{Brassard}. If a quantum computer is in a superposition of desired and
undesired states, the QSearch algorithm may be used to obtain a desired state
upon measurement. The need for our modification arises from the fact that
QSearch is not guaranteed to terminate if there are no desired states in the
superposition. The modified QSearch algorithm will terminate once the
probability of the superposition containing a desired state is sufficiently low. 
In section \ref{sec:amplitude_amplification}, we review the core
concepts of amplitude amplification and the original QSearch algorithm. 
In section \ref{sec:stopping_criteria}, we introduce a method for determining
when the probability of the superposition containing a desired result has fallen
below an arbitrary tolerance value. 
\subsection{Amplitude Amplification}\label{sec:amplitude_amplification}

In this section, we show how the amplitude amplification algorithm QSearch
\cite[Theorem 3]{Brassard} can be used to perform the search step of a GPS algorithm. We
begin with a brief review of the algorithm itself.
\par
Suppose we have a measurement-free quantum algorithm $\mathcal{A}$.
Furthermore, suppose that when $\mathcal{A}$ is applied to the initial state
$\ket{0}^{\otimes N}$, the result is a superposition of computational basis
states in $\mathcal{C}\left(N\right)$ some of which are called \textit{desired} states ($\mathcal{D}$) and others which are called \textit{undesired} states ($\mathcal{U}$). More formally, we assume that  
\begin{equation}
  \mathcal{A}\ket{0} = \sum_{\ket{j}\in\mathcal{D}}\alpha_{j}\ket{j} +\sum_{\ket{k}\in\mathcal{U}}\alpha_{k}\ket{k}
\end{equation}
where $\mathcal{D},\mathcal{U}\subseteq \mathcal{C}\left(N\right)$ with $\mathcal{D}\cap\mathcal{U}=\emptyset$ and
\[
\sum_{\ket{j}\in\mathcal{D}}|\alpha_{j}|^2 +\sum_{\ket{k}\in\mathcal{U}}|\alpha_{k}|^2=1.
\]
Amplitude amplification methods are a class of quantum algorithms designed to
increase the probability of obtaining a desired state upon
measurement. Introduced in \cite{Brassard}, the authors developed 
multiple amplitude amplification algorithms, each of which make different assumptions on 
both the number of desired states and the user's knowledge of this number. 
\par
For our purposes, we wish to be able to apply amplitude amplification when there
may be an arbitrary (possibly $0$) number of desired states, and the user has no
knowledge of this number. The assumptions made by QSearch are the closest to
satisfying these requirements. Qsearch assumes there is at least
one desired state, but that the actual number of desired states is unknown. 
It works by constructing the measurement-free quantum algorithm $Q$, which can be
repeatedly applied to increase the probability of obtaining a desired
state upon measurement. We define $Q$ as $Q =
-\mathcal{A}S_{0}\mathcal{A}^{-1}S_{\chi}$, where 
\begin{itemize}
  \item $\mathcal{A}^{-1}$ is the quantum algorithm obtained by taking the inverse of $\mathcal{A}$. 

  \item $S_{0}$ is the measurement-free quantum algorithm, which acts as the identity for all
    computational basis states other than $\ket{0}^{\otimes N}$, and $S_{0}\ket{0}^{\otimes N} =
        -\ket{0}^{\otimes N}$. Thus for $\ket{x} \in \mathcal{C}\left(N\right)$ 
        \begin{equation*}
          S_{0}\ket{x} =
          \begin{cases}
           -\ket{x} & \text{if } \ket{x} = \ket{0}^{\otimes N} \\
            \ket{x} & \text{if } \ket{x} \neq \ket{0}^{\otimes N} \\
          \end{cases}
        \end{equation*}

  \item $S_{\chi}$ is the measurement-free quantum algorithm that flips the sign of desired
    computational basis states. Thus for $\ket{x} \in \mathcal{C}\left(N\right)$ 
    \begin{equation*}
      S_{\chi}\ket{x} = 
      \begin{cases}
        -\ket{x} & \text{if } \ket{x}\in\mathcal{D}\\
         \ket{x} & \text{if } \ket{x}\in\mathcal{C}\left(N\right)\setminus\mathcal{D}\\
      \end{cases}
    \end{equation*}
\end{itemize}
With our main ingredient in hand, the quantum algorithm $Q$, we are equipped to
formally state the QSearch algorithm.
\begin{center}
  \captionof{algorithm}{QSearch}\label{alg:qsearch}
  \begin{flushleft}
    \textbf{Input:} A measurement-free quantum algorithm $\mathcal{A}$, 
    which operates on $N$ qubits, $S_{0}$ and $S_{\chi}$ as defined above,  
    and a quantum state composed of $n$ qubits initialized to the state $\ket{0}^{\otimes N}$.   
      \end{flushleft}
  \begin{algorithmic}[1]
   \State Set $l = 0$ and let $c$ be any constant such that $1 < c < 2$. 
    \State Apply $\mathcal{A}$ to $\ket{0}^{\otimes N}$ and measure. If the result of
    measurement is a desired state, then return the state. 
    \While{A desired state has not been returned}{}
    \State Increase $l$ by 1 and set $M = \lceil c^{l} \rceil$.
    \State Initialize a register of appropriate size to the state
    $\mathcal{A}\ket{0}^{\otimes N}$.  
    \State Set $j$ to be an integer in $\left[1,M\right]$ uniformly at random.
    \State Apply the algorithm $Q$ to the state $\mathcal{A}\ket{0}^{\otimes N}$
    $j$ times.
    \State Measure the state. If it is desired, return the state. 
    \EndWhile
  \end{algorithmic}
  \begin{flushleft}
  \textbf{Output:} A desired state.  
  \end{flushleft}
 \rule[15pt]{\textwidth}{0.5pt}
\end{center}

The following theorem from \cite[Theorem 3]{Brassard} elaborates on the number of 
applications of $\mathcal{A}$ and $\mathcal{A}^{-1}$ that will be required by QSearch.
For the sake of concision, we refer the reader to \cite{Brassard} for its proof.

\begin{theorem}\label{thm:qsearch}
  Let $\mathcal{A}$ be a measurement-free quantum algorithm, such that 
  $\mathcal{A}\ket{0}^{\otimes N}$ may be written as the
  superposition of $N$ states, $t$ of which are desired. Exactly one of the
  following cases holds based on the value of t:
  \begin{itemize}

    \item $t > 0$: QSearch will return a desired state using
      $\Theta\left(\sqrt{\frac{N}{t}}\right)$ applications of $\mathcal{A}$ and
      $\mathcal{A}^{-1}$.

    \item $t=0$: QSearch will fail to terminate.

   \end{itemize}
\end{theorem}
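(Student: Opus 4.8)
The plan is to follow the amplitude‑amplification analysis of \cite[Theorem~3]{Brassard}, which has two ingredients: reduce $Q$ to a rotation on a two‑dimensional subspace, then run a stage‑by‑stage expectation computation over the executions of Algorithm~\ref{alg:qsearch}. First I would put $a=t/N\in[0,1]$, the probability that the measurement in line~2 returns a desired state, and fix $\theta\in[0,\pi/2]$ with $\sin^{2}\theta=a$. Writing $\ket{\Psi_{\mathcal D}}$ and $\ket{\Psi_{\mathcal U}}$ for the normalized projections of $\mathcal{A}\ket{0}^{\otimes N}$ onto $\mathrm{span}\,\mathcal D$ and $\mathrm{span}\,\mathcal U$, we have $\mathcal{A}\ket{0}^{\otimes N}=\sin\theta\,\ket{\Psi_{\mathcal D}}+\cos\theta\,\ket{\Psi_{\mathcal U}}$. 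Inside the plane $\mathcal H=\mathrm{span}\{\ket{\Psi_{\mathcal D}},\ket{\Psi_{\mathcal U}}\}$ the operator $S_{\chi}$ is the reflection fixing $\ket{\Psi_{\mathcal U}}$ while $-\mathcal{A}S_{0}\mathcal{A}^{-1}$ is the reflection fixing $\mathcal{A}\ket{0}^{\otimes N}$, so their composite $Q$ rotates the $Q$‑invariant plane $\mathcal H$ by angle $2\theta$; hence
\[ Q^{j}\mathcal{A}\ket{0}^{\otimes N}=\sin\!\big((2j+1)\theta\big)\,\ket{\Psi_{\mathcal D}}+\cos\!\big((2j+1)\theta\big)\,\ket{\Psi_{\mathcal U}}, \]
and a measurement after $j$ applications of $Q$ returns a desired state with probability exactly $\sin^{2}\!\big((2j+1)\theta\big)$. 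The case $t=0$ is then immediate: $\mathrm{span}\,\mathcal D=\{0\}$, $S_{\chi}=I$, every iterate stays in the undesired subspace, each measurement yields an undesired state, and the loop guard ``a desired state has not been returned'' is never falsified, so QSearch runs forever.

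For $t>0$ the engine of the argument is an averaging estimate: if $j$ is uniform on $\{1,\dots,M\}$ then, writing $\sin^{2}x=\tfrac12(1-\cos 2x)$ and summing the resulting geometric series of cosines,
\[ \bar P_{M}:=\frac1M\sum_{j=1}^{M}\sin^{2}\!\big((2j+1)\theta\big)=\frac12-\frac{\sin\!\big(4(M+1)\theta\big)-\sin(4\theta)}{4M\sin(2\theta)}\ \ge\ \frac12-\frac{1}{2M\sin(2\theta)}. \]
Thus $\bar P_{M}\ge\tfrac14$ as soon as $M\ge 2/\sin(2\theta)$, and in all cases $1-\bar P_{M}\le\tfrac12+\tfrac1{2M\sin(2\theta)}$. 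Let $l^{\ast}$ be the first stage with $\lceil c^{l}\rceil\ge 2/\sin(2\theta)$. Since $1<c<2$ and, for $a$ bounded away from $1$, $1/\sin(2\theta)=\Theta(1/\sin\theta)=\Theta(\sqrt{N/t})$, we get $c^{l^{\ast}}=\Theta(\sqrt{N/t})$, and stages $1,\dots,l^{\ast}$ together cost $\sum_{i\le l^{\ast}}\lceil c^{i}\rceil=O(c^{l^{\ast}})=O(\sqrt{N/t})$ applications of $\mathcal{A}$ and $\mathcal{A}^{-1}$.

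It remains to bound the contribution of the stages past $l^{\ast}$. Let $T$ be the random stage at which QSearch returns; then the expected number of applications of $\mathcal{A}$ and $\mathcal{A}^{-1}$ is $\sum_{l\ge1}\Theta(c^{l})\,\Pr[T\ge l]$ (each $Q$ uses $\mathcal{A}$ and $\mathcal{A}^{-1}$ once, and the remaining oracle calls in Algorithm~\ref{alg:qsearch} are of lower order). Reaching stage $l^{\ast}+r$ forces stages $l^{\ast},\dots,l^{\ast}+r-1$ all to fail, and since $\lceil c^{\,l^{\ast}+s}\rceil\sin(2\theta)\ge c^{s}$ the conditional failure probability at stage $l^{\ast}+s$ is at most $\tfrac12(1+c^{-s})$; because $\sum_{s\ge0}c^{-s}<\infty$ the product $\prod_{s\ge0}(1+c^{-s})$ is a finite constant $C$, so $\Pr[T\ge l^{\ast}+r]\le C\,2^{-r}$. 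Hence $\sum_{l>l^{\ast}}\Theta(c^{l})\Pr[T\ge l]\le C'c^{l^{\ast}}\sum_{r\ge1}(c/2)^{r}=O(c^{l^{\ast}})=O(\sqrt{N/t})$, the series converging exactly because $c<2$. This gives the $O(\sqrt{N/t})$ upper bound; the matching $\Omega(\sqrt{N/t})$ bound holds because any stage with $M=o(1/\theta)$ succeeds with probability only $O((M\theta)^{2})=o(1)$, so with constant probability QSearch reaches a stage with $M=\Theta(1/\theta)$, at which the expected number of applied $Q$'s is already $\Theta(M)=\Theta(\sqrt{N/t})$. Combining the two directions yields $\Theta(\sqrt{N/t})$.

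The step I expect to be the main obstacle is the post‑critical expectation bound: the per‑stage cost grows geometrically, so the crude estimate $1-\bar P_{M}\le 3/4$ is not enough, and one genuinely needs the refined bound $1-\bar P_{M}\le\tfrac12+O\!\big(1/(M\sin 2\theta)\big)$ so that the failure probabilities approach $\tfrac12$ fast enough for $\prod_{s}(1+c^{-s})$ to converge; the hypothesis $1<c<2$ is then precisely what makes $\sum_{r}(c/2)^{r}$ summable. A secondary nuisance is the regime where $a=t/N$ is close to $1$, where $\sin(2\theta)$ is small although $\sqrt{N/t}=\Theta(1)$: there one argues separately that the line~2 measurement together with the first one or two loop iterations already returns a desired state with constant probability, so QSearch still terminates in $O(1)=O(\sqrt{N/t})$ expected oracle calls.
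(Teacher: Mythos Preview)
The paper does not actually prove this theorem: immediately before the statement it says ``For the sake of concision, we refer the reader to \cite{Brassard} for its proof,'' so there is no in-paper argument to compare against. Your proposal is a faithful and essentially correct reconstruction of the Brassard--H{\o}yer--Mosca--Tapp analysis that the paper cites: the two-dimensional rotation picture for $Q$, the averaged success probability $\bar P_{M}\ge \tfrac12-\tfrac{1}{2M\sin(2\theta)}$, the critical stage $l^{\ast}$ with $c^{l^{\ast}}=\Theta(1/\sin(2\theta))=\Theta(\sqrt{N/t})$, and the geometric tail bound exploiting $c<2$ are exactly the ingredients of \cite[Theorem~3]{Brassard}. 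Your identification of the delicate point---that the naive $3/4$ failure bound is insufficient and one needs the refined $\tfrac12+O(c^{-s})$ per-stage failure probability so that $\sum_{r}(c/2)^{r}$ converges---is accurate, and your handling of the $t=0$ case and the $a\to 1$ edge case is appropriate. In short, your argument is the proof the paper is pointing to.
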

Theorem \ref{thm:qsearch} guarantees that if a successful state exists among
those being searched, it will be found using $\Theta\left(\sqrt{N}\right)$
applications of $\mathcal{A}$. However, if such a state does not exist, the
algorithm does not terminate. To apply QSearch in our setting, we must ensure
that it will terminate, regardless of if a desired state exists or not.
\subsection{Adding a Stopping Criteria to QSearch}\label{sec:stopping_criteria}

In this section, we introduce a stopping criteria to be used in our
modified QSearch. This criteria will allow for an arbitrarily low probability of
terminating the algorithm when a desired state exists. 
\par
Suppose we have access to a measurement-free quantum algorithm $\mathcal{A}$
such that $\mathcal{A}\ket{0}^{\otimes N}$ may be written as a superposition of
$N$ states, $t$ of which are desired. As stated in theorem \ref{thm:qsearch}, if
$t=0$, then QSearch will fail to terminate. Unless we measure each of the $N$
states, it is impossible to know for certain that $t=0$. However, we can bound
the probability of $t>0$ using our knowledge of $N$ and the number of
iterations of QSearch that have been performed. 
Suppose $0 < \frac{t}{N} < \frac{3}{4}$, and $M$ is some integer such that $M >
\sqrt{N}$. Then as shown in the proof of 
\cite[Theorem 3]{Brassard}, if we apply $Q$, $j$ times, where the integer $j$ is chosen from 
$\left[1,M\right]$ uniformly at random, then the probability of
obtaining a desired state is bounded below by
\begin{equation}
  \frac{1}{2}\left(1 - \frac{1}{2M\sqrt{\frac{t}{N}}}\right).
\end{equation}
We use this knowledge to prove the following lemma: 
\begin{Lemma}\label{thm:stopping}
  Suppose we have access to a measurement-free quantum algorithm $\mathcal{A}$
  such that $\mathcal{A}\ket{0}^{\otimes N}$ may be written as a superposition
  of $N$ states, $t$ of which are desired. Let $M \in \mathbb{Z}, M> \sqrt{N}$,
  and $j \in \mathbb{Z}$ is an integer chosen from $\left[1,M\right]$ uniformly 
  at random. If $0 < \frac{t}{N} < \frac{3}{4}$ and we have applied $u$ iterations of
  QSearch, each failing to produce a desired state, then the probability
  of failing to find a desired state on the next iteration of QSearch
  is bounded above by $\left(\frac{3}{4}\right)^{u}$.
\end{Lemma}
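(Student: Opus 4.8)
The plan is to reduce the claim to two facts: (i) a single iteration of QSearch that runs with a parameter $M>\sqrt{N}$ fails to return a desired state with probability at most $\tfrac{3}{4}$, and (ii) distinct iterations of QSearch are statistically independent, so that the probability of $u$ consecutive failures is the product of the individual failure probabilities. Granting these, multiplying $u$ bounds of $\tfrac34$ gives the stated $\left(\tfrac34\right)^{u}$.

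For (i), I would start from the lower bound recalled immediately before the lemma: when $0<\tfrac{t}{N}<\tfrac{3}{4}$ and $M>\sqrt{N}$, applying $Q$ a number of times $j$ drawn uniformly from $[1,M]$ and then measuring yields a desired state with probability at least $\tfrac{1}{2}\bigl(1-\tfrac{1}{2M\sqrt{t/N}}\bigr)$. The crux is to show that this lower bound is itself at least $\tfrac14$. Since $t$ is the number of desired states and $t>0$, it is a positive integer, so $t\ge 1$; together with $M^{2}>N$ this gives $M^{2}t\ge M^{2}>N$, i.e. $M\sqrt{t/N}\ge 1$, hence $\tfrac{1}{2M\sqrt{t/N}}\le\tfrac12$ and therefore $\tfrac{1}{2}\bigl(1-\tfrac{1}{2M\sqrt{t/N}}\bigr)\ge\tfrac14$. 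Thus such an iteration succeeds with probability at least $\tfrac14$, so it fails with probability at most $\tfrac34$.

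For (ii), I would note that each pass through the while loop of Algorithm \ref{alg:qsearch} freshly re-initializes a register to $\mathcal{A}\ket{0}^{\otimes N}$, draws a new $j$ uniformly at random, applies $Q$ that many times, and measures; the distribution of the measurement outcome on a given pass depends only on that pass's independent draw of $j$ (together with the fixed data $\mathcal{A}$, $S_{0}$, $S_{\chi}$, $N$, $t$), and is unaffected by the outcomes of the earlier passes. Hence the events ``pass $\ell$ fails'' are mutually independent, and combining with (i) the probability that $u$ such iterations all fail — equivalently, that no desired state has been found after $u$ iterations, which is exactly the event of ``failing'' described in the statement — is at most $\left(\tfrac34\right)^{u}$.

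The main obstacle is bookkeeping rather than depth. First, one must use the Brassard estimate in the form already averaged over the uniform choice of $j$, so that it applies per iteration rather than per fixed $j$. Second, the count of $u$ iterations must be restricted to iterations for which $M=\lceil c^{\ell}\rceil>\sqrt{N}$, since the estimate in (i) is only available in that regime; the early iterations of QSearch, where $M\le\sqrt{N}$ (in particular the initial bare application of $\mathcal{A}$), do not enjoy the $\tfrac34$ bound and are excluded. The only conceptually nontrivial point is the independence in (ii), and it follows precisely because each pass prepares its state from scratch rather than reusing a post-measurement state.
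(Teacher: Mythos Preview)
Your proposal is correct and follows essentially the same route as the paper: bound a single iteration's success probability below by $\tfrac14$ via the Brassard estimate with the worst case $t=1$ and $M>\sqrt{N}$, then invoke independence of iterations to multiply the $\tfrac34$ failure bounds. Your write-up is in fact more careful than the paper's, since you make explicit why $t\ge 1$, why the iterations are independent, and why only the passes with $M>\sqrt{N}$ are counted toward $u$.
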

\begin{proof}
In the course of proving Theorem \ref{thm:qsearch} (\cite[Theorem 3]{Brassard}), \cite{Brassard} shows that if we apply $Q$, $j$ times, then the probability of
obtaining a desired state is bounded below by $\frac{1}{2}\left(1 - 
\frac{1}{2M\sqrt{\frac{t}{N}}}\right)$. Since we do not know the value
of $t$ ahead of time, we assume the worst case of $t=1$ and reduce this bound to obtain our result. 
\begin{equation*}  
\begin{split}
\frac{1}{2}\left(1-\frac{1}{2M\sqrt{\frac{t}{N}}}\right) &=  \frac{1}{2}\left(1 -
\frac{\sqrt{N}}{2M\sqrt{t}}\right) \\
& \geq \frac{1}{2}\left(1 - \frac{\sqrt{N}}{2M}\right).
\end{split}
\end{equation*}
For $M > \sqrt{N}$ we have
\begin{equation*}
\frac{1}{2}\left(1 - \frac{\sqrt{N}}{2M}\right) > \frac{1}{4}.
\end{equation*}
Thus the probability of failing to find a desired state is bounded above by
$\frac{3}{4}$. Since each iteration of QSearch is independent, the probability
of failing to find a desired state during $u$ iterations of QSearch is bounded above by
$\left(\frac{3}{4}\right)^{u}$.
\end{proof}
Using Lemma \ref{thm:stopping}, we can determine when the probability of missing
an existing desired solution has fallen below an arbitrary tolerance value. We
now have the tools necessary for the introduction of our modified QSearch
algorithm.
\begin{center}
  \captionof{algorithm}{Modified QSearch}\label{alg:mod_qsearch}
  \begin{flushleft}
    \textbf{Input:} A measurement-free quantum algorithm $\mathcal{A}$, 
    which operates on $N$ qubits, $S_{0}$ and $S_{\chi}$ as defined above, a
    tolerance value $\tau>0$,  
    and a quantum state composed of n qubits initialized to the state $\ket{0}^{\otimes N}$.    
      \end{flushleft}
  \begin{algorithmic}[1]
   \State Set $l=0$ and let $c$ be any constant such that $1<c<2$. 
    \State Set $u$ = 0.
    \State Apply $\mathcal{A}$ to $\ket{0}^{\otimes N}$ and measure. If the result of
    measurement is a desired state, then return the state. 
    \While{A desired state has not been returned and 
           $u<\frac{\ln\left(\tau\right)}{\ln\left(\frac{3}{4}\right)}$}{}
    \State Increase $l$ by 1 and set $M = \lceil c^{l} \rceil$.
    \If{$M > \sqrt{N}$}
    \State Set  $u = u + 1$.
    \EndIf
    \State Initialize a register of appropriate size to the state
    $\mathcal{A}\ket{0}^{\otimes N}$.  
    \State Set $j$ to be an integer in $\left[1,M\right]$ uniformly at random.
    \State Apply the algorithm $Q$ to the state $\mathcal{A}\ket{0}^{\otimes N}$
    $j$ times.
    \State Measure the state. If it is desired, return the desired state. 
    \EndWhile
    \State If a desired state was not found, return the failure to find a
    desired state.
  \end{algorithmic}
  \begin{flushleft}
  \textbf{Output:} A desired state or failure to find a desired state.  
  \end{flushleft}
 \rule[15pt]{\textwidth}{0.5pt}
\end{center}
With our modified QSearch algorithm introduced, we now show how it may be
used to perform the search step of a given GPS algorithm.
\section{The Quantum Search Step}\label{sec:the_quantum_search_step}

In this section, we introduce our main product, the quantum search step
algorithm. This algorithm completes the search step of a given
GPS algorithm using $\mathcal{O}\left(\sqrt{N}\right)$ quantum oracle calls, compared to
the $\mathcal{O}\left(N\right)$ classical oracle calls. Here we follow the
standard assumption that the computational cost between quantum and classical
oracle calls is negligible \cite{Gilyn},\cite{Jordan},\cite{Bernstein}. Let $X =
\left\{x_{j}\right\}_{j=0}^{N-1}$. Recall that we represent $x\in X$
as a $d n$ binary string, where the $k$th set of $d$ bits represents the
$k$th element of the point. For a given objective function $f:\mathbb{R}^{n}
\rightarrow \mathbb{R}$, we assume access to a quantum oracle $F$, which is able
to implement the transformation 
\begin{equation*}
F\sum\limits_{j=0}^{N-1}\ket{x_{j}}\ket{0} \rightarrow
\sum\limits_{j=0}^{N-1}\ket{x_{j}}\ket{f\left(x_{j}\right)}. 
\end{equation*}
In section \ref{sec:the_algorithm} we show how to implement the operator $Q$ needed to use the modified QSearch
algorithm before formally stating the quantum search step algorithm. In section
\ref{sec:advantages}, we discuss the advantages of using the quantum search step algorithm over
classical methods, how the quantum search step preserves the original
convergence properties of a given GPS algorithm, and why we choose not to use
the modified QSearch algorithm to perform the poll step.

\subsection{Preparing the Operator Q}\label{sec:the_algorithm}

Recall that our goal is to use the modified QSearch algorithm to search for an
improved mesh point among a set of points from a mesh, as described in section \ref{sec:gps}. This
application requires that we construct the operator $Q = \mathcal{A}S_{0}\mathcal{A}^{-1}S_{\chi}$, 
which is equivalent to implementing the operators 
$\mathcal{A}$, $S_{0}$, $\mathcal{A}^{-1}$, and $S_{\chi}$.
In this section, we will show how to implement the following components of $Q$:
\begin{itemize}
  \item $\mathcal{A}$, such that $\mathcal{A}\ket{0}^{\otimes N}$ produces a superposition 
        of the points we wish to search.  

  \item $S_{0}$ is the measurement-free quantum algorithm, which acts as the identity for all
        computational basis states other than $\ket{0}^{\otimes N}$, and $S_{0}\ket{0}^{\otimes N} =
        -\ket{0}^{\otimes N}$. Thus for $\ket{x} \in \mathcal{C}\left(N\right)$ 
        \begin{equation*}
          S_{0}\ket{x} =
          \begin{cases}
           -\ket{x} & \text{if } \ket{x} = \ket{0}^{\otimes N} \\
            \ket{x} & \text{if } \ket{x} \neq \ket{0}^{\otimes N} \\
          \end{cases}
        \end{equation*}

  \item $\mathcal{A}^{-1}$ is the quantum algorithm obtained by taking the inverse of $\mathcal{A}$. 

  \item $S_{\chi}$ is the measurement-free quantum algorithm that flips the sign of
        computational basis states corresponding to improved mesh points. 
        Thus for $\ket{x} \in \mathcal{C}\left(N\right)$ 
        \begin{equation*}
        S_{\chi}\ket{x} = 
        \begin{cases}
          -\ket{x} & \text{if } x \text{ is an improved mesh point}\\
          \ket{x} & \text{if } x \text{ is not an improved mesh point}\\
        \end{cases}
        \end{equation*}
\end{itemize}
$S_{\chi}$ presents the greatest challenge for constructing $Q$ since it requires that we be able to
mark improved mesh points. We do this by designing $\mathcal{A}$ to operate on
three quantum registers. The first two are the same as those used by our oracle
$F$. The third register will be used to compare the objective function value at
the current iterate and the points we wish to search.
\par
Let $X = \left\{x_{j}\right\}_{j=0}^{N-1}$ be a set of $N$ points chosen 
from the mesh, $f\left(x_{j}\right) = f_{j}$ be the objective function evaluated at the
point $x_{j}$, and $x_{k}$ be the current iterate. We will design $\mathcal{A}$ such that
\begin{equation}
  \mathcal{A}\ket{0} =
  \frac{1}{\sqrt{N}}\sum\limits_{j=0}^{N-1}\ket{x_{j}}\ket{f_{j}}\ket{f_{j}
- f_{k}}. 
\end{equation}
The sign qubit of $\ket{f_{j} - f_{k}}$ is then in the state $\ket{1}$ if and only if  
$\ket{x_{j}}\ket{f_{j}}\ket{f_{j} - f_{k}}$ is a desired state. We may use this
qubit as a control qubit in our construction of $S_{\chi}$. Our algorithm $\mathcal{A}$ is then as follows: 
\begin{center}
  \captionof{algorithm}{$\mathcal{A}$}\label{alg:A}
  \begin{flushleft}
\textbf{Input:} Let $f\left(x_{k}\right) = f_{k}$ be the objective function 
    evaluated at the current iterate $x_{k}$, and let $X =
    \left\{x_{j}\right\}_{j=0}^{N-1}$ be a set of $N$ points chosen from the
    mesh.
  \end{flushleft}
  \begin{algorithmic}[1]

    \State Place the third register in the state $\ket{-f_{k}}$
           \begin{equation*}
              \ket{\phi_{0}} = \ket{0}\ket{0}\ket{-f_{k}}.
            \end{equation*}

  \State Place the first register into a uniform superposition of the $N$ points. 
         \begin{equation*}
         \ket{\phi_{1}} =
          \frac{1}{\sqrt{N}
          }\sum\limits_{j=0}^{N-1}\ket{x_{j}}\ket{0}\ket{-f_{k}}.
          \end{equation*}

\State Apply the oracle $F$ to the first two registers
\begin{equation*}
  \ket{\phi_{2}} = 
  \frac{1}{\sqrt{N}}\sum\limits_{j=0}^{N-1}\ket{x_{j}
                   }\ket{f_{j}}\ket{f_{k}}.
  \end{equation*}

\State Add the second register to the third register using the in-place signed
       qft addition algorithm introduced in \cite[Section 3.1]{Sahin}
       \begin{equation*}
       \ket{\phi_{3}} = 
   \frac{1}{\sqrt{N}}\sum\limits_{j=0}^{N-1}\ket{x_{j}
    }\ket{f_{j}}\ket{f_{j} - f_{k}}.
\end{equation*}
  \end{algorithmic}
  \begin{flushleft}
  \textbf{Output:} $\ket{\phi_{3}}$ 
  \end{flushleft}
 \rule[15pt]{\textwidth}{0.5pt}
\end{center}

In this context, the set of desired and undesired states are naturally given as
\begin{align*}
  \mathcal{D}&:=\{\ket{x_{j}
    }\ket{f_{j}}\ket{f_{j} - f_{k}}:f_{j} - f_{k}< 0\}\\
  \mathcal{U}&:=\{\ket{x_{j}
    }\ket{f_{j}}\ket{f_{j} - f_{k}}:f_{j} - f_{k}\geq 0\}.
\end{align*}
With $\mathcal{A}$ defined, $\mathcal{A}^{-1}$ is easily obtained by as the
inverse of $\mathcal{A}$. To implement $S_{0}$, we first consider the single qubit gate
defined by the unitary matrix   
\begin{equation*} U =
  \begin{bmatrix}
    -1 & 0 \\
     0 & 1 
  \end{bmatrix}.
\end{equation*}
The action of this gate on the computational basis is $U\ket{0} = -\ket{0}$ and
$U\ket{1} = \ket{1}$. Thus $U$ acts as $S_{0}$ for a single qubit. However, our
algorithm will always require more than one qubit. We may use $U$ to implement
$S_{0}$ for an arbitrary number of qubits by applying $U$ to the first qubit,
controlled on the remaining qubits being in the state $\ket{0}$.   
\par
As with $S_{0}$, we can implement $S_{\chi}$ as a controlled version of a single
qubit gate. Recall that after applying $\mathcal{A}$, the sign qubit of the
comparison register is in the state $\ket{1}$ if the state represents an improved
mesh point, otherwise it is in the state $\ket{0}$. Consider the single qubit
gate represented by the matrix
\begin{equation*} V =
  \begin{bmatrix}
    e^{i\pi} & 0 \\
    0 & e^{i\pi} 
  \end{bmatrix}.
\end{equation*}
We can implement $S_{\chi}$ by applying $V$ to the first qubit of the input register, controlled on the
sign qubit of the comparison register being in the state $\ket{1}$. Now that we
have a method of preparing all the ingredients of $Q$, we are prepared to
present the quantum search step. 
\begin{center}
  \captionof{algorithm}{The Quantum Search Step}\label{alg:quantum_search_step}
  \begin{flushleft}
    \textbf{Input:} $\mathcal{A}$, $S_{0}$, $\mathcal{A}^{-1}$, and $S_{\chi}$
    as defined above, a set of points to be searched $\left\{x_{j}\right\}_{j=0}^{N-1}$, 
    the current iterate $x_{k}$, a tolerance value $\tau_{0}$ > 0, and three
    quantum registers in the state $\ket{\psi} =
    \ket{0}\ket{0}\ket{0}$.
  \end{flushleft}
  \begin{algorithmic}[1]
    \State Apply modified QSearch to $\ket{\psi}$.
    \If{modified QSearch returns an improved mesh point, $x'$}
    \State Set $x_{k+1} = x'$ and update the mesh size parameter. 
    \Else{terminate the algorithm and return the failure to find an improved
    mesh point.} 
    \EndIf
  \end{algorithmic}
  \begin{flushleft}
    \textbf{Output:} The next iterate $x_{k+1}$, or failure to find an improved
    mesh point.
  \end{flushleft}
 \rule[15pt]{\textwidth}{0.5pt}
\end{center}
With the quantum search step in hand, we now discuss its effect on convergence
and how it compares to classical methods. 

\subsection{Convergence and the Quantum Advantage}\label{sec:advantages}

In this section, we discuss the effects modified QSearch may have on the
convergence properties of GPS algorithms as well as the advantages of the quantum search step compared
to classical methods.
\par
It is worthwhile to note that the only difference between the search step and
the poll step is the set of points that are considered. The motivation for separating
an iteration into these steps is the unique roll they each fill within a given 
GPS algorithm. The search step allows for a much wider and more varied selection
of points from the mesh. This is because it places very few
restrictions on which points are to be considered. They must each each come from
a given mesh, and the total number of points must be finite. This allows an
impressive level of flexibility when deciding how to implement the search step.
In practice, this flexibility is key to the effectiveness of GPS algorithms
\cite{Audet}. However, this flexibility also prevents the search step from being
sufficient to guarantee the algorithm will converge. This is the roll of the poll
step. In \cite[Theorems 3.7, 3.9, 3.14 and Corollaries 3.11, 3.12, 3.16]{Audet}, 
Audet and Dennis present a variety of different
convergence results, each differing in it's assumptions about the objective
function. However, while these results all assume the poll step is performed as
described in section \ref{sec:gps}. However, they make no assumptions about the
search step. Thus we can think of the poll step as an extra step to ensure the algorithm will
converge. This is why we restrict our use of modified QSearch to the search
step. While we may ensure that the probability of overlooking an improved mesh
point is arbitrarily low, we cannot ensure it is $0$ without increasing our
oracle calls to $\mathcal{O}\left(N\right)$.  
\par
Suppose we wish to consider $N$ points from the mesh. Classical methods require
$\mathcal{O}\left(N\right)$ oracle calls to complete the search step. However,
the quantum search step can search the same $N$ points using
$\mathcal{O}\left(\sqrt{N}\right)$ oracle calls. We now conclude this section by
summarizing the advantages of the quantum search step in the following theorem,
\begin{theorem}\label{thm:quantum_search_step}
  Let $X$ be a set of $N$ points chosen from the mesh, $t$ of which are improved mesh
  points. Exactly one of the following cases will hold 
  \begin{itemize}
    \item The quantum search step will return an improved mesh point using
      $\Theta\left(\sqrt{\frac{N}{t}}\right)$ calls to the quantum oracle.  

    \item The quantum search step will terminate without finding the next
      iterate.
  \end{itemize}
  If $X$ contains an improved mesh point, the probability of the quantum search
  step terminating without finding the next iterate may be set arbitrarily low.
\end{theorem}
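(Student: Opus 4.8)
The plan is to assemble the theorem from three ingredients that are already in place: the oracle-complexity dichotomy of QSearch (Theorem \ref{thm:qsearch}), the stopping-criterion bound of Lemma \ref{thm:stopping}, and the correctness of the operator $\mathcal{A}$ (Algorithm \ref{alg:A}) together with $S_0$, $S_\chi$ (Algorithm \ref{alg:quantum_search_step}). First I would verify the \emph{marking} claim: by construction of $\mathcal{A}$, the state $\mathcal{A}\ket{0}$ is the uniform superposition $\tfrac{1}{\sqrt N}\sum_{j}\ket{x_j}\ket{f_j}\ket{f_j-f_k}$, and by the definition of $\mathcal{D}$ the desired states are exactly those with $f_j-f_k<0$, i.e. exactly the improved mesh points among $X$. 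Hence the number of desired states in the superposition equals $t$, the number of improved mesh points in $X$, and $S_\chi$ (built from the controlled-$V$ gate on the sign qubit of the third register) flips precisely their signs. This identifies the abstract ``desired state'' count of QSearch with the concrete quantity $t$ in the theorem statement.

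Next I would split into the two cases. If $t>0$, then $0<t/N$; I would invoke Theorem \ref{thm:qsearch} to conclude that QSearch returns a desired state using $\Theta(\sqrt{N/t})$ applications of $\mathcal{A}$ and $\mathcal{A}^{-1}$, and since each application of $\mathcal{A}$ uses exactly one call to the quantum oracle $F$ (step 3 of Algorithm \ref{alg:A}) while $S_0$ and $S_\chi$ use none, the number of quantum oracle calls is also $\Theta(\sqrt{N/t})$. The returned state is, by the marking discussion above, an improved mesh point, so the quantum search step sets $x_{k+1}=x'$ and updates the mesh parameter — this is the first bullet. The only subtlety here is that modified QSearch carries a stopping test, so I must check that for $t>0$ the loop is not cut off prematurely with positive probability on the relevant scale; I would handle this together with the last claim below.

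For $t=0$ there are no desired states, so no measurement can ever return one; modified QSearch runs its loop until $u$ reaches $\ln(\tau_0)/\ln(3/4)$ and then reports failure. Thus the quantum search step terminates without finding the next iterate — the second bullet — and the two cases are mutually exclusive since they are distinguished by whether $t=0$. Finally, for the low-probability claim: assume $t>0$. If $t/N<3/4$, Lemma \ref{thm:stopping} directly bounds the probability that $u$ consecutive post-threshold QSearch iterations all fail by $(3/4)^u$; choosing $u \ge \ln(\tau_0)/\ln(3/4)$ — which is exactly the loop bound in Algorithm \ref{alg:mod_qsearch} — makes this at most $\tau_0$, so the failure probability is $\le\tau_0$, arbitrarily small. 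If instead $t/N\ge 3/4$, then more than half the states are desired and a single measurement of $\mathcal{A}\ket{0}$ in step 3 already succeeds with probability $>3/4$, so after $u$ independent iterations the failure probability is again at most $(1/4)^u\le\tau_0$ for the same range of $u$; I would fold this into one bound. The main obstacle I anticipate is bookkeeping at the boundary: Lemma \ref{thm:stopping} counts only those iterations with $M>\sqrt N$ (which is exactly when the variable $u$ is incremented), so I must confirm that the loop in modified QSearch performs at least $\lceil\ln(\tau_0)/\ln(3/4)\rceil$ such large-$M$ iterations before halting, and that each is independent — this follows because $M=\lceil c^l\rceil$ grows geometrically and the per-iteration randomness (the choice of $j$) is fresh each pass, but it is the one place where a careless argument could overcount and should be stated explicitly.
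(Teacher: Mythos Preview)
Your proposal is correct and follows essentially the same approach as the paper: invoke Theorem~\ref{thm:qsearch} for the $\Theta(\sqrt{N/t})$ oracle-complexity claim and Lemma~\ref{thm:stopping} to bound the failure probability by $(3/4)^{u}\le\tau$. Your treatment is in fact more careful than the paper's own proof---you explicitly verify the marking, distinguish oracle calls from applications of $\mathcal{A}$, and handle the $t/N\ge 3/4$ edge case that Lemma~\ref{thm:stopping} formally excludes---whereas the paper's argument is terse and glosses over these points.
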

\begin{proof}
  If the quantum search step returns an improved mesh point, Theorem
  \ref{thm:qsearch} guarantees that it will do so using $\Theta\left(\sqrt{\frac{N}{t}}\right)$ 
  calls to the quantum oracle.
  \par
  Suppose the algorithm terminates without producing the next iterate and 
  $X$ contains an improved mesh point. By Lemma \ref{thm:stopping}, the
  probability of failing to find an improved mesh point on the next iteration
  is bounded above by $\left(\frac{3}{4}\right)^{u}$, where $u$ is the number of iterations of
  modified QSearch performed after $M > \sqrt{N}$. Thus for $\tau > 0$, if 
  $u < \frac{\ln\left(\tau\right)}{\ln\left(\frac{3}{4}\right)}$, then 
  \begin{equation*}
    \begin{split}
      u &< \frac{\ln\left(\tau\right)}{\ln\left(\frac{3}{4}\right)} \\
      u ln\left(\frac{3}{4}\right) &< ln\left(\tau\right) \\
      ln\left(\left(\frac{3}{4}\right)^{u}\right) &< ln\left(\tau\right) \\
      \left(\frac{3}{4}\right)^{u} &< \tau
    \end{split}
  \end{equation*}
  and we have that the probability of the quantum search step failing to find an improved mesh point
  on the next iteration is less than $\tau$.
\end{proof}

\section{Concluding Remarks}\label{sec:concluding_remarks}

In this work, we introduced a modified version of the amplitude amplification
algorithm QSearch \cite[Theorem 3]{Brassard}. Our modified QSearch guarantees that the
algorithm will terminate using a finite number of steps, with an arbitrarily low
probability of failing to produce a desired state, if one exists. We then
showed how modified QSearch can be used to implement the quantum search step, a
quantum algorithm for performing the search step of a given 
GPS algorithm. When searching $N$ points, $t$ of which are improved mesh
points, our algorithm is able to complete the search step using
$\mathcal{O}\left(\sqrt{N}\right)$ quantum oracle calls, compared to the
$\mathcal{O}\left(N\right)$ classical oracle calls needed traditionally.
If $t>0$, our quantum search step has an arbitrarily low probability of failing
to identify an improved mesh point.
\bibliography{Quantum_Enhanced_Pattern_Search_Optimization.bib}{}

\begin{thebibliography}{1}

\bibitem{Aru}
Srinivasan Arunachalam.
\newblock Quantum speed-ups for boolean satisfiability and derivative-free
  optimization.
\newblock Master's thesis, University of Waterloo, 2014.

\bibitem{Audet}
Charles Audet and J.~E. Dennis.
\newblock Analysis of generalized pattern searches.
\newblock {\em SIAM Journal on Optimization}, 13(3):889--903, 2002.

\bibitem{Bernstein}
Ethan Bernstein and Umesh Vazirani.
\newblock Quantum complexity theory.
\newblock {\em SIAM Journal on Computing}, 26(5):1411--1473, 1997.

\bibitem{Brassard}
Gilles Brassard, Peter Høyer, Michele Mosca, and Alain Tapp.
\newblock Quantum amplitude amplification and estimation.
\newblock {\em Quantum Computation and Information}, page 53–74, 2002.

\bibitem{Buhrman}
Harry Buhrman, Richard Cleve, John Watrous, and Ronald de~Wolf.
\newblock Quantum fingerprinting.
\newblock {\em Physical Review Letters}, 87(16):167902, 2001.

\bibitem{Gilyn}
Andr{\'a}s Gily{\'e}n, Srinivasan Arunachalam, and Nathan Wiebe.
\newblock Optimizing quantum optimization algorithms via faster quantum
  gradient computation.
\newblock {\em ArXiv}, abs/1711.00465, 2017.

\bibitem{Jordan}
Stephen~P. Jordan.
\newblock Fast quantum algorithm for numerical gradient estimation.
\newblock {\em Phys. Rev. Lett.}, 95:050501, 2005.

\bibitem{Sahin}
Engin Sahin.
\newblock Quantum arithmetic operations based on quantum {Fourier} transform on
  signed integers.
\newblock {\em International Journal of Quantum Information}, 18(06):2050035,
  2020.

\bibitem{Torczon}
Virginia Torczon.
\newblock On the convergence of pattern search algorithms.
\newblock {\em SIAM Journal on Optimization}, 7(1):1--25, 1997.

\end{thebibliography}
\bibliographystyle{plain}

\end{document}